\documentclass[journal]{IEEEtran}
\usepackage{amsmath,graphicx,hyperref}
\usepackage{amsmath, amsfonts, mathtools, amsthm, amssymb}
\usepackage{comment}
\usepackage[ruled,vlined,linesnumbered]{algorithm2e}
\usepackage{nomencl}
\usepackage{xcolor}
\usepackage{stfloats}
\usepackage{algpseudocode}
\usepackage{kotex}
\usepackage[caption=false,font=normalsize,labelfont=rm,textfont=rm]{subfig}
\usepackage{nomencl}
\usepackage{multirow}
\usepackage{upgreek}
\usepackage{bm}

\DeclareMathOperator*{\argmax}{arg\,max}

\newtheorem{lemma}{Lemma}

\begin{document}

\title{QoS-Constrained Resource Pattern Design for V2X-ISAC Systems}

\author{Hanyoung~Park,
        Gangmin Kim, 
        Yoo-Seung Song,
        and~Ji-Woong~Choi
\thanks{This work was supported by the Institute of Information \& Communications Technology Planning \& Evaluation (IITP) grant funded by the Korean government (MSIT) (No. RS-2025-02283230). \textit{(Corresponding author: Ji-Woong Choi.)}}
\thanks{H. Park was with the Department of Electrical Engineering and Computer Science, Daegu Gyeongbuk Institute of Science and Technology (DGIST), Daegu 42988, South Korea. He is now with Institute of Innovation for Future Army, Republic of Korea Army, Daejeon 34051, South Korea (e-mail: prkhnyng@dgist.ac.kr).}
\thanks{G. Kim is with the Interdisciplinary Studies of Artificial Intelligence, DGIST, Daegu 42988, South Korea (e-mail: kjkj0513@dgist.ac.kr).}
\thanks{Y.-S. Song is with
Autonomous Driving Intelligence Research Section, Electronics and Telecommunications Research Institute (ETRI), Daejeon 34129, South Korea (e-mail: yssong00@etri.re.kr).}
 \thanks{J.-W. Choi is with the Department
of Electrical Engineering and Computer Science, DGIST, Daegu 42988, South Korea (e-mail: jwchoi@dgist.ac.kr).}
}

\maketitle

\begin{abstract}
Integrated sensing and communication (ISAC) has emerged as a promising approach for vehicle-to-everything (V2X) systems by enabling communication and sensing over shared radio resources without additional installation of dedicated sensors. However, candidate resources may experience different communication qualities due to varying channel conditions and resource contention, which should be considered when designing sensing resource patterns. In this letter, we propose a quality-of-service (QoS)-constrained resource pattern selection framework that jointly considers communication reliability and sensing performance. The sensing objective is formulated based on the Fisher information matrix for joint range and velocity estimation, while a minimum packet reception ratio (PRR) is imposed as the communication QoS constraint. To avoid the complexity of exhaustive search, a greedy selection algorithm with one-swap refinement is developed. Simulation results show that the proposed method improves sensing performance over conventional resource patterns while satisfying the PRR requirement and achieves performance close to the exhaustive-search optimum and sensing-only optimum.
\end{abstract}

\begin{IEEEkeywords}
Integrated sensing and communication (ISAC), vehicle-to-everything (V2X), resource pattern, quality-of-service (QoS), Fisher information matrix.
\end{IEEEkeywords}

\IEEEpeerreviewmaketitle

\section{Introduction}
\label{sec:intro}

Vehicle-to-everything (V2X) communication has been widely regarded as a key enabling technology for future intelligent transportation systems (ITSs), as it enables information exchange among vehicles, infrastructure, networks, and pedestrians~\cite{ref:v2x1, ref:v2x2}. In addition to enabling rapid data exchange to provide safe and reliable transportation and support advanced driver assistance systems (ADASs), V2X communication can also facilitate the transmission and sharing of sensing information in a real-time manner~\cite{ref:v2x_energy,ref:3gpp22186}. In this context, integrated sensing and communication (ISAC) enables vehicles to acquire sensing information without requiring dedicated radar sensors, thereby avoiding the additional hardware cost associated with separate sensing systems~\cite{ref:v2x_isac, ref:v2x_isac2}. 

Owing to these advantages, the integration of ISAC into connected and autonomous vehicles has recently attracted considerable research interest.
To support the joint communication and sensing service with better performance, previous studies on V2X-ISAC have primarily focused on beamforming~\cite{ref:rw_beamforming}, sensing latency~\cite{ref:rw_sensinglatency}, power allocation~\cite{ref:rw_subcarrier}, localization~\cite{ref:rw_localization}, waveform design~\cite{ref:rw_waveform}, among others. However, the performance of these techniques inherently depends on the underlying resource pattern used for sensing and communication~\cite{ref:rw_v2xsidelink}. Despite its importance, resource pattern design has received relatively limited attention. Although recent studies considered resource pattern design for V2X-ISAC, it did not deeply account for the resource placement considering time- and frequency-varying channel quality across V2X radio slots~\cite{ref:rw_pattern, ref:rw_pattern2, ref:rw_pattern3}. To address this limitation, this work proposes a resource pattern placement framework for V2X systems that explicitly considers slot-dependent channel quality while jointly considering communication quality-of-service (QoS) and sensing performance.
The main contributions of this work are summarized as follows:
\begin{itemize}
    \item We develop an ISAC resource pattern design framework that explicitly accounts for slot-dependent channel quality.
    \item The proposed method constructs the resource pattern by selecting resources that improve sensing performance while satisfying the communication QoS constraint based on the Fisher information matrix.
    \item Compared with conventional resource pattern baselines, the proposed method improves sensing performance while maintaining communication performance. In particular, it achieves lower Cramér–Rao lower bounds (CRLBs) for range and velocity estimation at the same packet reception ratio (PRR).
    \item To avoid prohibitive complexity of exhaustive search, a resource selection algorithm is developed based on greedy selection followed by local swap refinement. The proposed method achieves performance close to that of the exhaustive-search optimum, while substantially reducing the computational complexity.
\end{itemize}

The rest of this paper is organized as follows. In Section~\ref{sec:prerequisites}, the prerequisites for the proposed method are described. In Section~\ref{sec:proposed}, the proposed method for resource pattern design is described in detail. In Section~\ref{sec:results}, the numerical results of the simulation and the performance analysis
based on them are described. Then, Section~\ref{sec:conclusion} concludes the paper.

\subsubsection*{Notation}
Uppercase and lowercase boldface letters denote matrices and vectors, respectively. For an arbitrary matrix $\mathbf{A}$, its $i$-th column vector is denoted by $\mathbf{a}_i$, and its $j$-th element is denoted by $A_{ij}$. For an arbitrary vector $\mathbf{b}$, its $k$-th element is denoted by $b_k$. $(\cdot)^T$ means transpose of a vector or a matrix.
$\det(\cdot)$ means the determinant of the corresponding matrix. Calligraphic letters denote sets. $|\mathcal{A}|$ is the cardinality of an arbitrary set $\mathcal{A}$.

\section{Prerequisites}\label{sec:prerequisites}

\subsection{System Model}
We consider a 5G-V2X sidelink Mode 2 system~\cite{ref:v2xmode2, ref:3gpp38214} in which a user equipment (UE) autonomously selects transmission resources based on its local sensing information.
The available sidelink resource pool consists of $N_t$ time slots and $N_f$ frequency subchannels.
The corresponding set of time-frequency resources is defined as
\begin{equation}
    \mathcal{R} = \{ (n, m): n\in\{1,...,N_t\}, m\in\{ 1,...,N_f \} \},
\end{equation}
where $n$ and $m$ denote the slot and subchannel indices, respectively.
Following the Mode 2 sensing procedure, resources that are regarded as occupied or reserved are excluded from the selection window~\cite{ref:3gpp38321, ref:3gpp38331}.
The remaining resources constitute the candidate set of the considered UE, denoted by $\mathcal{C}\subseteq \mathcal{R}$.
From $\mathcal{C}$, the considered UE selects $K$ resources for joint communication and sensing. The resulting resource pattern is represented by
\begin{equation}
    \mathcal{S} = \{r_1,...,r_K\}, r_k=(n_k,m_k),
\end{equation}
where $n_k$ and $m_k$ denote the slot and subchannel associated with the $k$-th selected resource.
Hence, a selected resource pattern satisfies $\mathcal{S}\subseteq \mathcal{C}$ and $|\mathcal{S}|=K$.
Let $F_k$ and $T_k$ denote the center frequency and transmission time associated with resource $r_k$, respectively. The set $\{ (F_k, T_k)  \}^K_{k=1}$ therefore determines the time-frequency geometry of the sensing waveform.

Meanwhile, due to autonomous resource selection in Mode 2, different candidate resources may experience different levels of contention from neighboring UEs. 
Assuming that the resource-selection events of neighboring UEs are conditionally independent, the collision probability associated with resource $r$ is given by
\begin{equation}
    P_{\mathrm{col}} (r) = 1- \prod_{j=1}^{N_\mathrm{neighbor}} (1-\pi_{j,r}),
\end{equation}
where $N_\mathrm{neighbor}$ denotes the number of neighboring UEs and $\pi_{j,r}$ means the resource-selection probability of neighboring UE $j$ for resource $r$.
Let $\bar{\gamma}_r$ denote the average received signal-to-interference-plus-noise ratio (SINR) over resource $r$. The probability of successful physical layer decoding with Rayleigh fading can be expressed as~\cite{ref:outage}
\begin{equation}
    P_{PHY}(r) = \exp\left( - \frac{\gamma_{\mathrm{th}}}{\bar{\gamma}_r} \right),
\end{equation}
where $\gamma_\mathrm{th}$ denotes the decoding threshold.
Accordingly, the packet success probability over resource $r$ is modeled as
\begin{equation}
    P_\mathrm{suc}(r) = [1-P_\mathrm{col}(r)]P_{PHY}(r).
\end{equation}
For a selected resource pattern $\mathcal{S}$, the average packet reception ratio (PRR) can be presented as
\begin{equation}
    \mathrm{PRR}(\mathcal{S}) = \frac{1}{K} \sum_{r\in\mathcal{S}} P_\mathrm{suc} (r).
\end{equation}

Meanwhile, each selected resource is assumed to occupy $Q$ subcarriers.
The frequency of the $q$-th subcarrier within resource $r_k$ is represented as
\begin{equation}
    f_{k,q} = F_k + \delta_q,
\end{equation}
where $\delta_q$ denotes the frequency offset from the resource center and satisfies $\sum_{q=1}^Q \delta_q = 0$.
For a sensing target characterized by range $R$ and radial velocity $v$, the corresponding round-trip delay and Doppler frequency are given by~\cite{ref:delaydoppler}
\begin{equation}
    \tau = \frac{2R}{c}, \quad \nu = \frac{2f_c v}{c},
\end{equation}
where $c$ is the speed of light and $f_c$ denotes the carrier frequency.

Also, after removing the known transmitted symbol, the sensing observation at the $q$-th subcarrier of the $k$-th selected resource is modeled as~\cite{ref:signal}
\begin{equation}
    y_{k,q} = \alpha e^{-j2\pi f_{k,q}\tau}e^{j2\pi T_k\nu} + w_{k,q},
\end{equation}
where $\alpha\in\mathbb{C}$ denotes the unknown complex target reflection coefficient and $w_{k,q}\sim \mathcal{CN}(0, \sigma^2)$ is additive complex Gaussian noise with variance $\sigma^2$.
The corresponding mean signal is therefore~\cite{ref:signaljsac}
\begin{equation}\label{eqn:meansignal}
    \mu_{k,q} = \alpha e^{-j2\pi f_{k,q}\tau} e^{j2\pi T_k\nu}.
\end{equation}

\subsection{CRLB Derivation}
Let the unknown parameter vector be defined as
\begin{equation}
    \bm{\uptheta} = [\alpha_R, \alpha_I, \tau, \nu]^T,
\end{equation}
where $\alpha_R = \mathrm{Re}(\alpha)$, $\alpha_I = \mathrm{Im}(\alpha)$ are the real and imaginary parts of $\alpha$, respectively. 
Based on this formulation, the $(i, j)$-th entry of the Fisher information matrix $\mathbf{J}$~\cite{ref:fisher} is then given by
\begin{equation}
    J_{i,j} = \frac{2}{\sigma^2} \mathrm{Re}\left\{ \sum_{k=1}^K \sum_{q=1}^Q \frac{\partial \mu_{k,q}^*}{\partial \theta_i}\frac{\partial \mu_{k,q}}{\partial \theta_j} \right\}.
\end{equation}
The derivatives with respect to delay and Doppler are
\begin{equation}
 \frac{\partial \mu_{k,q}}{\partial \tau} = -j2\pi f_{k,q} \mu_{k,q}, 
    \quad\frac{\partial \mu_{k,q}}{\partial \nu}
    = j2\pi T_k \mu_{k,q}.
\end{equation}
Also, the derivatives with respect to the real and imaginary parts of the target coefficient are
\begin{equation}
    \frac{\partial \mu_{k,q}}{\partial \alpha_R} = e^{-j2\pi f_{k,q}\tau }e^{j2\pi T_k\nu},\quad
    \frac{\partial \mu_{k,q}}{\partial \alpha_I} = je^{-j2\pi f_{k,q}\tau }e^{j2\pi T_k\nu}.
\end{equation}
Since the complex target coefficient $\alpha$ is an unknown nuisance parameter, we partition the Fisher information matrix as
\begin{equation}
    \mathbf{J} = \begin{bmatrix}
        \mathbf{J}_{\alpha\alpha} & \mathbf{J}_{\alpha x} \\
        \mathbf{J}_{x\alpha} & \mathbf{J}_{xx} 
    \end{bmatrix}
    =
    \begin{bmatrix}
        J_{\alpha_R\alpha_R} & J_{\alpha_R\alpha_I} & J_{\alpha_R\tau} & J_{\alpha_R\nu} \\
        
        {J}_{\alpha_I\alpha_R} & {J}_{\alpha_I\alpha_I} & {J}_{\alpha_I\tau} & {J}_{\alpha_I\nu} \\

        J_{\tau\alpha_R} & J_{\tau\alpha_I} & J_{\tau\tau} & J_{\tau\nu} \\

        J_{\nu\alpha_R} & J_{\nu\alpha_I} & J_{\nu\tau} & J_{\nu\nu} \\
    \end{bmatrix}
    ,
\end{equation}
where $\mathbf J_{\alpha\alpha}$ contains the information associated with the nuisance parameters $[\alpha_R,\alpha_I]^T$, $\mathbf J_{xx}$ contains the information associated with the parameters of interest $\mathbf{x}=[\tau,\nu]^T$, and $\mathbf J_{\alpha x}=\mathbf J_{x\alpha}^{T}$ represents the cross-information between the two parameter groups.
For example, since the exponential terms in \eqref{eqn:meansignal} have unit magnitude,
\begin{equation}
    J_{\alpha_R \alpha_R} = J_{\alpha_I \alpha_I} = \frac{2KQ}{\sigma^2}, \quad J_{\alpha_R \alpha_I} = 0.
\end{equation}
Similarly, the delay-Doppler submatrix $\mathbf{J}_{xx}$ is obtained as
\begin{equation}
    \mathbf{J}_{xx} = \beta \begin{bmatrix}
        \sum_{k,q} f^2_{k,q} & -\sum_{k,q} f_{k,q} T_k \\ 
        -\sum_{k,q} f_{k,q} T_k & \sum_{k,q} T_k^2,
    \end{bmatrix}
\end{equation}
where $\beta = 8\pi^2|\alpha|^2/\sigma^2$.
The equivalent Fisher information matrix for delay and Doppler is obtained through the Schur complement as
\begin{equation}\label{eqn:schur}
    \mathbf{J}_\mathrm{eq} = \mathbf{J}_{xx}  - \mathbf{J}_{x\alpha}\mathbf{J}_{\alpha\alpha}^{-1}\mathbf{J}_{\alpha x}.   
\end{equation}
Lemma~\ref{lem:1} shows this equivalent Fisher information matrix can be simplified with frequency and time locations.
\begin{lemma}\label{lem:1}
    Let $\bar{f}$ and $\bar{t}$ denote the average frequency and time location over all $KQ$ sensing observations, respectively. Also, let the $KQ$ frequency and time locations be indexed by $\ell$. Then, the equivalent Fisher information matrix can be simplified as
    \begin{equation}
        \mathbf{J}_\mathrm{eq} = \beta
        \begin{bmatrix}
            S_{ff} & -S_{ft} \\
            -S_{ft} & S_{tt}
        \end{bmatrix},
    \end{equation}
    where
    \begin{equation}
        S_{ff} = \sum_{\ell=1}^{KQ} (f_\ell - \bar{f})^2,
        S_{tt} = \sum_{\ell=1}^{KQ} (t_\ell - \bar{t})^2,
        S_{ft} = \sum_{\ell=1}^{KQ} (f_\ell - \bar{f})(t_\ell - \bar{t}).
    \end{equation}
\end{lemma}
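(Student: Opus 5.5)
The plan is to compute the cross-information block $\mathbf{J}_{x\alpha}$ explicitly and substitute it, together with $\mathbf{J}_{\alpha\alpha}=\frac{2KQ}{\sigma^2}\mathbf{I}_2$ and $\mathbf{J}_{xx}$, into the Schur complement \eqref{eqn:schur}. The result should then reduce to the variance/covariance form by the elementary identity $\sum_\ell (a_\ell-\bar a)(b_\ell-\bar b)=\sum_\ell a_\ell b_\ell - KQ\,\bar a\bar b$. Throughout, I relabel the $KQ$ pairs $(f_{k,q},T_k)$ as $(f_\ell,t_\ell)$, with $\bar f=\frac{1}{KQ}\sum_\ell f_\ell$ and $\bar t=\frac{1}{KQ}\sum_\ell t_\ell$.

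First I will compute the cross terms. Write $e_\ell$ for the unit-modulus exponential, so $\mu_\ell=\alpha e_\ell$ and $\partial\mu_\ell/\partial\alpha_R=e_\ell$. Then
\[
\frac{\partial \mu_\ell^*}{\partial\alpha_R}\frac{\partial\mu_\ell}{\partial\tau}=e_\ell^*(-j2\pi f_\ell\,\alpha e_\ell)=-j2\pi f_\ell\,\alpha .
\]
Taking the real part gives $J_{\alpha_R\tau}=\frac{2}{\sigma^2}\cdot 2\pi\alpha_I\sum_\ell f_\ell$. Similarly, $\partial\mu_\ell^*/\partial\alpha_I=-je_\ell^*$, so $J_{\alpha_I\tau}=-\frac{2}{\sigma^2}\cdot 2\pi\alpha_R\sum_\ell f_\ell$. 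The Doppler entries follow the same pattern with $f_\ell$ replaced by $-t_\ell$: $J_{\alpha_R\nu}=-\frac{4\pi}{\sigma^2}\alpha_I\sum_\ell t_\ell$ and $J_{\alpha_I\nu}=\frac{4\pi}{\sigma^2}\alpha_R\sum_\ell t_\ell$. Hence
\[
\mathbf{J}_{x\alpha}=\frac{4\pi}{\sigma^2}\begin{bmatrix}\sum_\ell f_\ell\\ -\sum_\ell t_\ell\end{bmatrix}\begin{bmatrix}\alpha_I & -\alpha_R\end{bmatrix},
\]
which is a rank-one matrix. Keeping these signs consistent is the main bookkeeping risk, so I will double-check them against the structure of $\mathbf{J}_{xx}$, whose off-diagonal entry also carries a minus sign.

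Next I will evaluate the correction term. Because $\mathbf{J}_{\alpha\alpha}^{-1}=\frac{\sigma^2}{2KQ}\mathbf{I}_2$ and $[\alpha_I,-\alpha_R][\alpha_I,-\alpha_R]^T=|\alpha|^2$, we get
\[
\mathbf{J}_{x\alpha}\mathbf{J}_{\alpha\alpha}^{-1}\mathbf{J}_{\alpha x}=\frac{16\pi^2|\alpha|^2}{\sigma^4}\cdot\frac{\sigma^2}{2KQ}\,\mathbf{u}\mathbf{u}^T=\frac{\beta}{KQ}\,\mathbf{u}\mathbf{u}^T,
\qquad \mathbf{u}=\Big[\sum_\ell f_\ell,\ -\sum_\ell t_\ell\Big]^T .
\]
Entrywise this equals $\beta KQ\begin{bmatrix}\bar f^2 & -\bar f\bar t\\ -\bar f\bar t & \bar t^2\end{bmatrix}$.

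Finally, I subtract this from $\mathbf{J}_{xx}$. The diagonal entries give $\beta(\sum_\ell f_\ell^2-KQ\bar f^2)=\beta S_{ff}$ and $\beta S_{tt}$. The off-diagonal entries give $-\beta(\sum_\ell f_\ell t_\ell-KQ\bar f\bar t)=-\beta S_{ft}$, which is the claimed form. The only delicate step is the sign and conjugation computation in the first step; everything after it is mechanical.
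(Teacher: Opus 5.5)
Your proposal is correct and follows the same route as the paper's proof. You evaluate the Schur complement to obtain $\sum_\ell f_\ell^2-\frac{1}{KQ}(\sum_\ell f_\ell)^2$ and the analogous $ft$ and $tt$ expressions, then identify these with the centered sums $S_{ff}$, $S_{ft}$, $S_{tt}$. Your explicit computation of the rank-one block $\mathbf{J}_{x\alpha}$, with its signs correctly worked out, supplies the step the paper only states as ``evaluating the Schur complement gives.''
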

\begin{proof}
    Please refer to Appendix~\ref{sec:app1}.
\end{proof}
Therefore, the determinant of the equivalent Fisher information matrix becomes
\begin{equation}
    \det(\mathbf{J}_\mathrm{eq}) = \beta^2 D,
\end{equation}
where $D=S_{ff}S_{tt}-S_{ft}^2$.
Accordingly, the CRLBs for delay and Doppler estimation are given by
\begin{equation}
    \mathrm{CRLB}_\tau = \frac{S_{tt}}{\beta D}, \quad \mathrm{CRLB}_\nu = \frac{S_{ff}}{\beta D}.
\end{equation}
Using the relationships between $(\tau,\nu)$ and $(R, v)$, the corresponding range and velocity CRLBs are 
\begin{equation}
    \mathrm{CRLB}_R = \frac{c^2}{4} \frac{S_{tt}}{\beta D}, \quad \mathrm{CRLB}_v = \frac{c^2}{4f_c^2}\frac{S_{ff}}{\beta D}.
\end{equation}

\section{QoS-Constrained Resource Pattern Selection}\label{sec:proposed}
Since $
    \det(\mathbf{J}_\mathrm{eq}) = \beta^2 D(\mathcal{S}),$
and $\beta$ is independent of the selected resource pattern, maximizing the determinant of the equivalent Fisher information matrix is equivalent to maximizing $D(\mathcal{S})$.
Accordingly, we adopt the $D$-optimal design criterion. 
The $D$-optimal criterion is particularly suitable for the considered joint range-velocity estimation problem because maximizing $\det(\mathbf{J}_\mathrm{eq})$ minimizes the volume of the corresponding joint CRLB uncertainty region. 
Based on these derivations, the optimization problem for resource pattern design can be formulated as
\begin{equation}\label{eqn:problem}
\begin{aligned}
    & \mathcal{S}^\star = \argmax_\mathcal{S}  D(\mathcal{S}) \\
    & \textnormal{s.t. } \mathrm{PRR}(\mathcal{S}) \geq \rho_\mathrm{min}, \\
    & \quad \ \ |\mathcal{S}| = K, \\
    & \quad \ \ \mathcal{S} \subseteq \mathcal{C}, 
\end{aligned}
\end{equation}
where $\rho_\mathrm{min}$ denotes the minimum required PRR.
The proposed formulation therefore selects, among the communication-feasible resource patterns, the one that maximizes the joint delay-Doppler information.

To solve the formulated problem in \eqref{eqn:problem}, exhaustive search can be employed; however, it may incur a prohibitively high computational burden. Therefore, we propose a suboptimal approach that achieves performance comparable to that of exhaustive search with substantially reduced computational complexity. 
The steps of the proposed resource selection method are organized in Algorithm~\ref{alg:algo}.
The algorithm is initialized by selecting a pair of resources from different slots. Among all candidate pairs that can still be extended to satisfy the final PRR requirement, the pair yielding the largest value of $D(\mathcal{S})$ is selected.
Starting from this pair, resources are added sequentially. At an intermediate stage with a partial set $\mathcal{S}_s$, a candidate resource $r$ is considered only if the remaining unselected resources can still provide sufficient communication reliability to satisfy 
\begin{equation}
    \sum_{r'\in \mathcal{S}_s \cup \{r\}} P_\mathrm{suc}(r') + \sum_{\ell=1}^{K-|\mathcal{S}_s|-1} p_{\ell}^\mathrm{max} \geq K\rho_\mathrm{min},
\end{equation}
where $p_{\ell}^\mathrm{max}$ denotes the largest available packet-success probability in the remaining admissible slots. Among the candidates satisfying this condition, the resource that maximizes $D(\mathcal{S}_s \cup \{r\})$ is selected. This procedure is repeated until $K$ resources are obtained.

To compensate for the myopic nature of greedy selection, the resulting pattern is further refined by a 1-swap search.
Specifically, one selected resource is replaced by one unselected candidate whenever the new pattern satisfies the PRR and slot constraints and yields a larger value of $D(\mathcal{S})$.
The swap operation is repeated until no further improvement is possible.
Then, the exhaustive benchmark evaluates up to $\begin{pmatrix}
    |\mathcal{C}| \\ K
\end{pmatrix}$ resource patterns, whereas the proposed method requires approximately $\mathcal{O}(|\mathcal{C}|^2)$ operations for pair initialization, $\mathcal{O}(K|\mathcal{C}|)$ for greedy selection, and $\mathcal{O}(IK|\mathcal{C}|)$ for the 1-swap refinement, where $I$ is the number of refinement iterations. Therefore, the proposed method substantially reduces the search complexity. 

\begin{algorithm}[t]
\caption{Greedy Resource Selection with 1-Swap Refinement}\label{alg:algo}
\SetAlgoLined
\textbf{input:} Candidate resource set $\mathcal{C}$, number of selected resources $K$, minimum PRR requirement $\rho_\mathrm{min}$

\textbf{initialize} $\mathcal{S}\leftarrow \emptyset$.

Select the admissible resource pair with the largest $D(\mathcal{S})$ that can be extended to satisfy the PRR requirement.

\While{$|\mathcal{S}|<K$}{
    For each admissible $r\in\mathcal{C}\backslash\mathcal{S}$, check whether $\mathcal{S}\cup \{r\}$ can still be completed to satisfy $\mathrm{PRR}\geq \rho_\mathrm{min}$.

    Select $r^\star = \argmax_r D(\mathcal{S}\cup \{r\})$.

    Update $\mathcal{S}\leftarrow \mathcal{S} \cup \{r^\star\}$.
}
\Repeat{no improving feasible swap exists}{Search all feasible one-resource swaps.

Replace the current pattern with the feasible swap yielding the largest increase in $D(\mathcal{S})$.}

\Return{$\mathcal{S}$}    
\end{algorithm}

\section{Simulation Results}\label{sec:results}
\subsection{Simulation Configuration}
We consider a 5G-V2X sidelink ISAC system operating at a carrier frequency of 5.9 GHz with a subcarrier spacing of 30 kHz. The resource pool consists of 8 time slots and 8 subchannels, where each slot has a duration of 0.5 ms and each subchannel contains 10 physical resource blocks (PRBs), corresponding to 120 subcarriers. Among the available time-frequency resources, $K=4$ resources are selected for ISAC operation, with at most one resource selected per slot.
Unless otherwise stated, 10 neighboring UEs are considered. The SINR of each candidate resource is independently drawn with a mean of 10 dB and a standard deviation of 1.5 dB, and the SINR threshold for PHY decoding is set to 0 dB. The minimum packet reception ratio (PRR) requirement is set to $\rho_{\min}=0.9$.

The proposed QoS-aware greedy selection with one-swap refinement, ``Greedy + Swap'', is compared with four baselines, ``Sensing-only'', ``Exhaustive search'', ``Fixed equispaced'', and ``Random feasible''. 
The ``Sensing-only'' selects the resource pattern solely based on sensing performance without considering the communication QoS constraint. The ``Exhaustive search'', which is the exhaustive-search-based QoS-constrained optimum, selects the best pattern among all patterns satisfying the PRR requirement and is used as an optimal benchmark. The ``Fixed equispaced'' selects resources with approximately uniform spacing in the time-frequency domain, while the ``Random feasible'' randomly selects one pattern from the set of patterns satisfying the PRR constraint.
The simulation results are averaged over 10000 Monte Carlo realizations. 

\subsection{Numerical Results and Analysis}


Fig.~\ref{fig:CRLB} shows the range and velocity CRLBs depending on the minimum required PRR and the number of neighboring UEs. As shown in Figs.~\ref{subfig:rangePRR} and \ref{subfig:velocityPRR}, the CRLBs generally increase slightly as the PRR constraint becomes more stringent. Nevertheless, the proposed ``Greedy + Swap'' method closely matches the ``Exhaustive search'' benchmark and maintains performance comparable to the ``Sensing-only'' scheme up to a minimum PRR of 0.9.
Figs.~\ref{subfig:rangeN} and \ref{subfig:velocityN} show similar trends with respect to the number of neighboring UEs. The proposed method again achieves CRLBs close to both the exhaustive-search benchmark and the ``Sensing-only'' scheme. 

Fig.~\ref{fig:time} shows the normalized runtime versus the required minimum PRR and the number of neighboring UEs. While the runtime of the ``Greedy + Swap'' method changes only slightly, that of the ``Exhaustive search'' method decreases as the constraints become more stringent due to the reduced number of candidate patterns. Nevertheless, ``Greedy + Swap'' requires less than half the runtime of ``Exhaustive search'' while achieving comparable performance even under more stringent pattern constraints.

\begin{figure}
    \centering
    \subfloat[\label{subfig:rangePRR}]{\includegraphics[width=0.5\columnwidth]{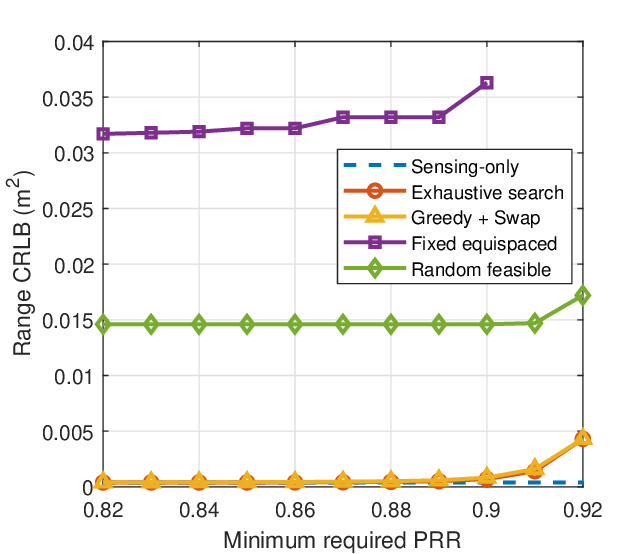}}
    \subfloat[\label{subfig:velocityPRR}]{\includegraphics[width=0.5\columnwidth]{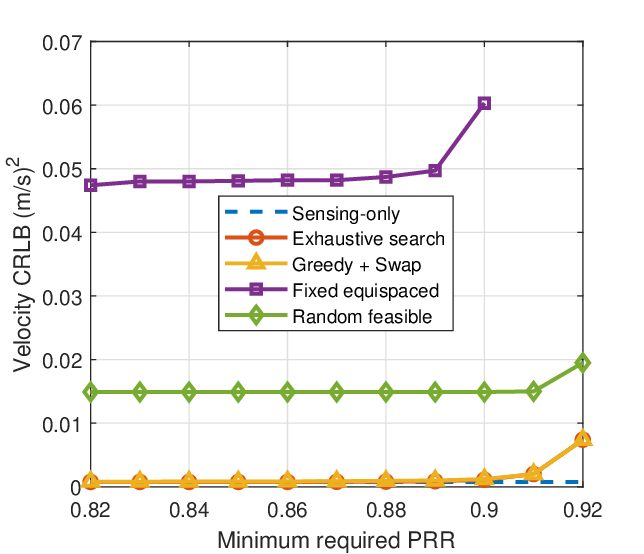}}

    \subfloat[\label{subfig:rangeN}]{\includegraphics[width=0.5\columnwidth]{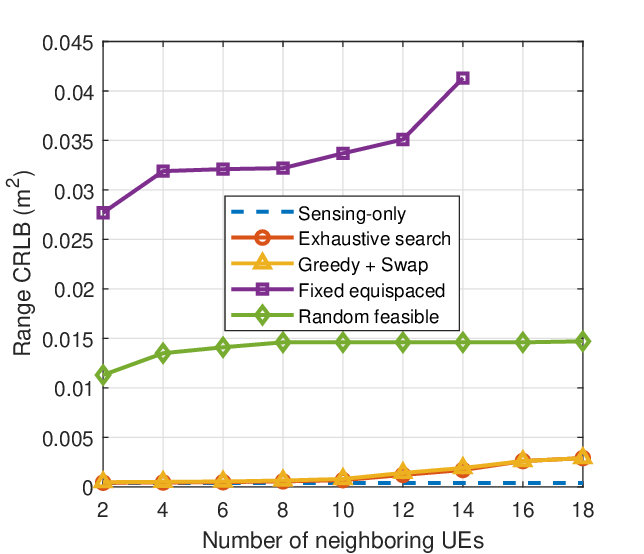}}
    \subfloat[\label{subfig:velocityN}]{\includegraphics[width=0.5\columnwidth]{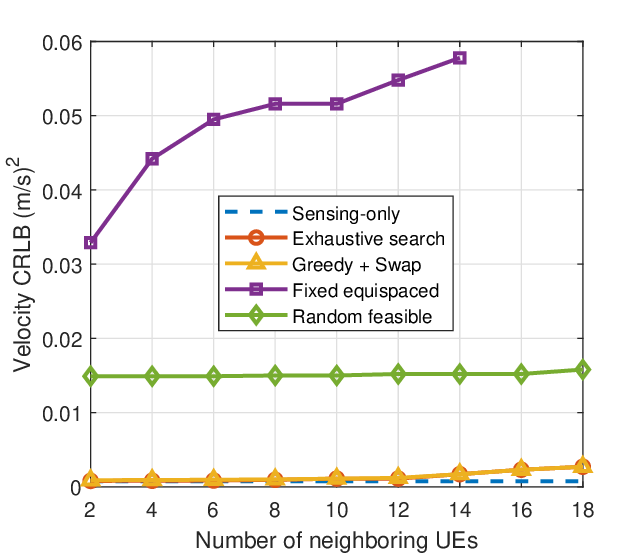}}
    
    \caption{CRLB depending on the required minimum PRR and the number of neighboring UEs. CRLBs of (a) range and (b) velocity depending on the required minimum PRR, (c) range and (d) velocity depending on the number of neighboring UEs.}
    \label{fig:CRLB}
\end{figure}

    

    

\begin{figure}
    \centering
    \subfloat[Required minimum PRR]{\includegraphics[width=0.5\columnwidth]{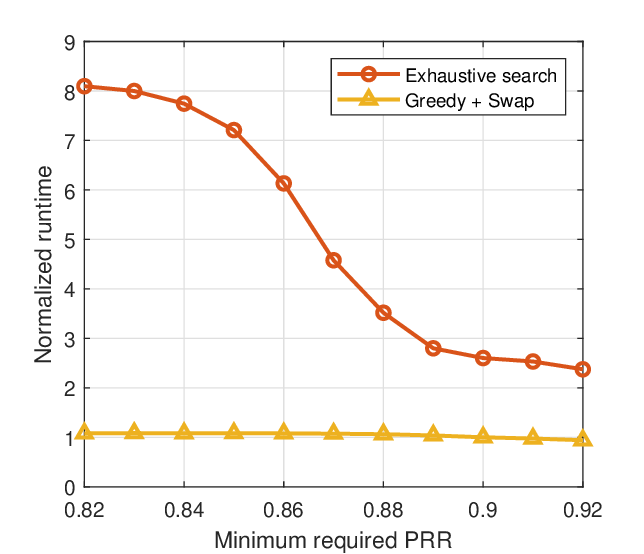}}
    \subfloat[Number of neighboring UEs]{\includegraphics[width=0.5\columnwidth]{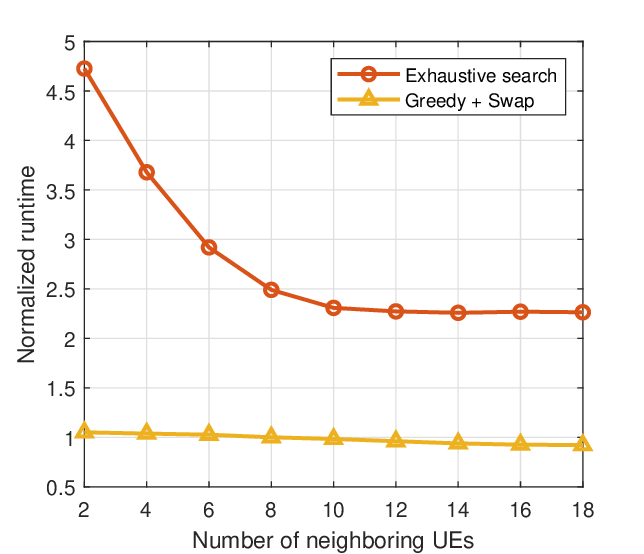}}
    
    \caption{Normalized runtime depending on the required minimum PRR and the number of neighboring UEs. The runtime is normalized to the case with 10 UEs and a minimum required PRR of 0.9.}
    \label{fig:time}
\end{figure}

\section{Conclusion}\label{sec:conclusion}
In this paper, we proposed a QoS-constrained resource pattern selection method for V2X-ISAC systems. By deriving the Fisher information matrix for joint range and velocity estimation, the resource pattern design was formulated to maximize sensing information while satisfying a minimum PRR requirement. To reduce the computational complexity of exhaustive search, a greedy resource selection algorithm followed by one-swap refinement was developed. Simulation results showed that the proposed method achieves range and velocity CRLB performance close to that of the exhaustive-search optimum while satisfying the communication QoS requirement. Moreover, the proposed method maintains sensing performance comparable to the sensing-only scheme.

\appendices
\section{Proof of Lemma \ref{lem:1}}\label{sec:app1}
Evaluating the Schur complement in \eqref{eqn:schur} gives
\begin{equation}
\begin{aligned}
    [\mathbf{J}_\mathrm{eq}]_{11} &= \beta\left[\sum_{\ell=1}^{KQ} f_\ell^2 - \frac{1}{KQ}\left( \displaystyle\sum_{\ell=1}^{KQ} f_\ell \right)^2\right] \\
    [\mathbf{J}_\mathrm{eq}]_{12} &= -\beta \left[ \displaystyle\sum_{\ell=1}^{KQ} f_\ell t_\ell - \frac{1}{KQ} \left( \displaystyle\sum_{\ell=1}^{KQ} f_\ell \right) \left( \displaystyle\sum_{\ell=1}^{KQ} t_\ell \right) \right] \\
    & = [\mathbf{J}_\mathrm{eq}]_{21},\\
    [\mathbf{J}_\mathrm{eq}]_{22} &= \beta \left[\displaystyle\sum_{\ell=1}^{KQ} t_\ell^2 - \frac{1}{KQ}\left( \displaystyle\sum_{\ell=1}^{KQ} t_\ell \right)^2\right]. 
\end{aligned}
\end{equation}
Using $\bar{f} = \frac{1}{KQ}\sum_{\ell=1}^{KQ} f_\ell$, $\bar{t} = \frac{1}{KQ}\sum_{\ell=1}^{KQ} t_\ell$,
and
\begin{equation}
\begin{aligned}
 \sum_{\ell=1}^{KQ}(f_\ell - \bar{f})^2 
    & = \sum_{\ell=1}^{KQ} f_\ell^2 -2\bar{f}\sum_{\ell=1}^{KQ} f_\ell  + \sum_{\ell=1}^{KQ} \bar{f}^2 
    \\& = \sum_{\ell=1}^{KQ} f_\ell^2 - 2KQ\bar{f}^2 + KQ \bar{f}^2
    \\& = \sum_{\ell=1}^{KQ} f_\ell^2 - KQ\bar{f}^2
    \\& = \sum_{\ell=1}^{KQ} f_\ell^2 - \frac{1}{KQ}\left( \sum_{\ell=1}^{KQ} f_\ell \right)^2,
\end{aligned}
\end{equation}
since $\sum_{\ell=1}^{KQ} f_\ell = KQ\bar{f}$ and
\begin{equation}
    KQ\bar{f}^2 = KQ\left( \frac{1}{KQ} \sum_{\ell=1}^{KQ} f_\ell \right)^2 = 
    \frac{1}{KQ} \left(  \sum_{\ell=1}^{KQ} f_\ell \right)^2,
\end{equation}
and same approach for $t$, the diagonal terms can be simplified with $S_{ff}$ and $S_{tt}$. Also, cross terms can be simplified with $-S_{ft}$ using the fact that
\begin{equation}
    \sum_{\ell=1}^{KQ} (f_\ell - \bar{f})(t_\ell - \bar{t}) = \sum_{\ell=1}^{KQ} f_\ell t_\ell - \frac{1}{KQ}\left( \sum_{\ell=1}^{KQ} f_\ell \right)\left( \sum_{\ell=1}^{KQ} t_\ell\right).
\end{equation}

\bibliographystyle{IEEEtran}
\bibliography{bibtex/bib/IEEEabrv,refs}

\end{document}